\documentclass[
    aps,
    pra,
    reprint,            % Mimics the journal's two-column layout
    superscriptaddress, % Essential for associating authors with multiple affiliations
    amsmath,
    amssymb,
    floatfix            % Fixes common float placement errors (figures/tables)
]{revtex4-2}

\usepackage{graphicx}   % Include figure files
\usepackage{dcolumn}    % Align table columns on decimal point
\usepackage{bm}         % Bold math (standard APS recommendation)
\usepackage{mathtools}  % Fixes/extensions to amsmath
\usepackage{amsthm}     % Theorem environments
\usepackage{xcolor}     % For text coloring
\usepackage{braket}
\usepackage{bbm}

\usepackage{float}
\makeatletter
\let\newfloat\newfloat@ltx
\makeatother

\usepackage{algorithm}
\usepackage{algpseudocode}

\usepackage[
    colorlinks=true,
    linkcolor=blue,
    citecolor=blue,
    urlcolor=blue
]{hyperref}

\newtheorem{definition}{Definition}
\newtheorem{theorem}{Theorem}
\newtheorem{lemma}{Lemma}
\newtheorem{proposition}{Proposition}

\newtheorem{remark}{Remark}
\newtheorem{example}{Example}

\begin{document}

\preprint{APS/123-QED}

\title{Restriction-Based Certificate of Bipartite Schmidt Rank in Hypergraph States}

\author{C. Fajardo}
\affiliation{IMDEA Networks, Avda. Mar Mediterr{\'a}neo, 22, 28918 Legan{\'e}s, Spain}
\author{M. Paraschiv}
\affiliation{IMDEA Networks, Avda. Mar Mediterr{\'a}neo, 22, 28918 Legan{\'e}s, Spain}
% \author{Ann Author}
%  \altaffiliation[Also at ]{Physics Department, XYZ University.}%Lines break automatically or can be forced with \\
% \author{Second Author}%
%  \email{Second.Author@institution.edu}
% \affiliation{%
%  Authors' institution and/or address\\
%  This line break forced with \textbackslash\textbackslash
% }%

\begin{abstract}
We investigate bipartite entanglement in qubit hypergraph states across an arbitrary fixed bipartition. 
Using the real equally weighted (REW) representation, the Schmidt rank across the cut can be computed as the real rank of a phase-cleaned cross-cut sign matrix. 
Whereas graph states admit an exact cut-rank rule, because the cross-cut phase is purely bilinear, hypergraph states typically contain higher-degree cross-cut interactions, for which the cut-rank rule fails. 
Our approach certifies entanglement by fixing a single computational-basis assignment on a subset of qubits, thereby selecting a submatrix on an \emph{active slice}. 
When this restriction removes all higher-degree cross-cut residues, the remaining cross-cut phase becomes bilinear up to cut-local terms. 
We call the resulting submatrices \emph{residual-free bilinear cores} and show that they yield an exponential Schmidt-rank lower bound in terms of the $\mathbb{F}_2$-rank of an exposed core matrix. 
We further give a combinatorial sufficient condition, phrased as a disjoint bridge matching, that guarantees the existence of large full-rank cores for broad families of CCZ-type bridge patterns, and we present a search-and-verify procedure that constructs and certifies such cores directly from the hyperedge description.
\end{abstract}

%\keywords{Suggested keywords}%Use showkeys class option if keyword
                              %display desired
\maketitle

\section{Introduction}

Highly structured multipartite entangled states are central to quantum information, both as a means of organizing quantum correlations and as concrete resources for various protocols. Among these, some of the most important are graph states, which provide a framework for measurement-based quantum computation, as well as quantum error correction and distributed protocols \cite{Raussendorf2001OneWay,Raussendorf2003Cluster,Hein2004Multiparty,Hein2006}. In studying these states, one often encounters the problem of establishing the entanglement across a given cut. This is important, for example, for tensor network descriptions and classical simulation strategies, as well as for the entanglement that can be localized between subsystems by measurements on the complement  \cite{VanDenNest2007Sim,Verstraete2004EntanglementCorrelations,Popp2005Localizable}.

A natural generalization of graph states is given by hypergraph states \cite{Qu2013EncodingHypergraphs, Rossi2013Hypergraph}, where multiqubit controlled phase interactions generate entanglement patterns which are not reducible to pairwise edges \cite{Rossi2013Hypergraph}. Hypergraph states are equivalent to real equally weighted (REW) states \cite{Rossi2013Hypergraph}, written in terms of Boolean phase functions, used in this form, for example, in standard oracle-based algorithms \cite{Qu2013Encoding}. Conceptually, hypergraph states also belong to a larger class of locally maximally entanglable states \cite{Kruszynska2009LME,GuHne2014Hypergraph}. In measurement-based models, higher-body phase gates (for example CCZ gates) naturally lead to using hypergraph states as resources \cite{MillerMiyake2016Hierarchy,Gachechiladze2019Depth,Takeuchi2019HypergraphUniversality}. In addition, efficient certification and verification procedures have been developed for them based on local Pauli measurements, which supports their use as practical resources  \cite{Morimae2017Verification,ZhuHayashi2019Verification}, while recent experiments have demonstrated preparation, verification, and processing of complete classes of 4-qubit hypergraph states in photonic platforms \cite{Huang2024Demonstration}. From the perspective of entanglement theory, a substantial body of work has investigated local-unitary classification and symmetries  \cite{GuHne2014Hypergraph,Lyons2015LUSym,Gachechiladze2017Graphical}, multipartite entanglement witnesses and bounds \cite{Ghio2018Witnesses}, purification protocols \cite{Vandre2023Purification}, and typical entanglement properties in random hypergraph ensembles \cite{Zhou2022RandomHypergraph}.  For a broader overview of developments and applications, see also the recent concise review \cite{Salem2025Review}.

Entanglement in hypergraph states has been investigated from several angles. Ref.~\cite{Qu2013EncodingHypergraphs} relate local entropic entanglement measures to purely combinatorial quantities via the Hamming weight of suitable adjacent subhypergraphs, while a much broader analysis of local-unitary equivalence classes, entanglement properties as well as non-classical correlations was carried out in Ref.~\cite{GuHne2014Hypergraph}. Going beyond qubits, Ref.~\cite{Xiong2018QuditHypergraphStates} introduced qudit hypergraph states and studied their response to generalized Pauli operations and measurements, connecting bipartite entanglement criteria to rank properties of reduced coefficient matrices. Finally, Ref.~\cite{Appel2022FFE} proposed a more general phase encoding framework of finite function encoding states, in which qubit hypergraph states are a special case of encoding Boolean functions into phases. For the bipartite case, they introduced a dephased normal form obtained via local diagonal operations and emphasized the role of a core submatrix in classification problems for phase-encoded states.

In this paper, we develop a Schmidt-rank certification method for qubit hypergraph states that combines the standard REW sign-matrix representation with local diagonal dephasing (\textit{phase cleaning}) \cite{Appel2022FFE} and with the bilinear cut-rank rule underlying the graph-state case \cite{Hein2006}. 
In spirit, our bounds are submatrix-rank certificates related to those used for (qudit) hypergraph states in Ref.~\cite{Xiong2018QuditHypergraphStates}, but specialized to single-shot computational-basis restrictions that expose a bilinear cross-cut core while excluding residual higher-degree cross terms. 
Concretely, we fix a single global assignment on a subset of qubits (an \emph{active slice}) and analyze the induced submatrix of the phase-cleaned cross-cut sign matrix. 
When the induced cross-cut phase becomes bilinear up to cut-local terms, the resulting \emph{residual-free bilinear core} certifies an exponential Schmidt-rank lower bound.

Our contributions are as follows: 
\begin{itemize}
    \item We \emph{define} \textit{residual-free bilinear cores}: submatrices exposed by a single global restriction (an \textit{active slice}) on which the remaining cross-cut phase is purely bilinear up to cut-local terms.
    \item We derive a Schmidt-rank certificate: any residual-free bilinear core with exposed matrix $\Gamma_{\text{core}}$ yields the bound $\mathrm{SR}_{A|B} \ \ge\ 2^{\mathrm{rank}_{\mathbb{F}_2}\!\left(\Gamma_{\mathrm{core}}\right)}$.
    \item We give a disjoint bridge matching sufficient condition (covering broad CCZ-bridge patterns) that guarantees the existence of large full-rank cores.
    \item We present an explicit search-and-verify procedure that proposes restrictions, checks residual-freeness, constructs $\Gamma_{\text{core}}$, and outputs a certified lower bound.
\end{itemize}

% In this paper, we formalize a restriction-based Schmidt-rank certification framework for hypergraph states built on the standard REW matrix representation as well as local dephasing ideas \cite{Appel2022FFE, Hein2006} . While submatrix rank arguments have been previously used for entanglement detection in hypergraph states \cite{Xiong2018QuditHypergraphStates}, we provide a general exponential lower bound mechanism by identifying restricted slices where the cross-phase is residual-free bilinear, enabling a cut-rank bound. We then provide a hypergraph combinatorial sufficient condition (disjoint bridge matching) that guarantees that such a slice exists, as well as an algorithmic search and verify procedure.

% Our contributions are as follows: 
% \begin{itemize}
%     \item We introduce the notion of \textit{residual-free bilinear cores} exposed by a single global restriction (which we shall denote as the \textit{active slice}) such that the restricted cross-phase becomes bilinear up to cut-local terms.
%     \item We prove that any residual-free bilinear core with exposed matrix $\Gamma_{\text{core}}$ gives an exponential lower bound $\mathrm{SR}_{A|B} \ \ge\ 2^{\mathrm{rank}_{\mathbb{F}_2}\!\left(\Gamma_{\mathrm{core}}\right)}$.
%     \item We provide a disjoint bridge matching condition (covering broad CCZ-bridge patterns) that guarantees the existence of large full-rank cores.
%     \item We give an explicit procedure that proposes restrictions, checks residual-freeness constructs $\Gamma_{\text{core}}$ and outputs a certified bound.
% \end{itemize}

The paper is organized as follows. Sec.~\ref{sec:prelim} provides preliminary notation and definitions and Sec.~\ref{sec:baseline} reproduces the standard cross-cut Schmidt rank rule for graphs, together with a series of examples identifying cases in which this rule cannot be directly applied (higher-degree edges across the cut). In Sec.~\ref{sec:results} we present a restriction-based method that applies when the cross-cut phase term is not purely bilinear. We provide a Schmidt-rank certificate as well as a sufficient combinatorial condition that ensures the existence of a lower-bound on cross-cut entanglement. This section also contains a proposed algorithm for certificate search and verification. Sec.~\ref{sec:discussion} presents a discussion of the results while Sec.~\ref{sec:conclusion} concludes the paper.

\section{Preliminaries}
\label{sec:prelim}

In this section we introduce the fundamental notation used throughout the paper as well as the required concepts. We review the real equally weighted (REW) description of hypergraph states in terms of Boolean phase polynomials and introduce a bipartition-dependent coefficient matrix, whose singular values are the Schmidt coefficients. We then discuss a process denoted as \textit{phase cleaning}, which removes cut-local phases, residing in only one of the subsystems, and isolates the cross-cut phase data relevant for detecting bipartite entanglement.

Throughout this paper we consider $n$ qubits labeled by the vertex $V=[n]\coloneqq\{1,2,\dots,n\}$. A bit string $x=(x_1,\dots,x_n)\in\{0,1\}^n$ will be described in the computational basis as $\ket{x}\equiv \ket{x_1}\otimes\cdots\otimes\ket{x_n}$. Furthermore, unless otherwise stated, we will work in  $\mathbb{F}_2=\{0,1\}$ with operations such as addition $\oplus$ being implied as modulo 2 (XOR). 

% ---------------------------------------------------------
\subsection{Hypergraph states and Boolean phase polynomials}
\label{subsec:prelim-rew}

We define a hypergraph as the pair $G=(V,E)$ with a vertex set $V=[n]$ and a hyperedge set $E\subseteq\mathcal{P}(V)$. If all hyperedges have degree $2$, one recovers an ordinary graph.

In the context of hypergraph states, each hyperedge $e\subseteq[n]$ represents a diagonal controlled-phase gate $CZ_e$ that contributes a sign to the computational basis depending on whether all qubits in $e$ are in the state $\ket{1}$ \cite{Rossi2013Hypergraph,GuHne2014Hypergraph}.
Thus, for every $x\in\{0,1\}^n$,
\begin{equation}
CZ_e \ket{x}
=
(-1)^{\prod_{i\in e} x_i}\,\ket{x},
\qquad
\prod_{i\in e}x_i \in \{0,1\}.
\label{eq:CZ-action}
\end{equation}
The gates $CZ_e$ are diagonal, thus they mutually commute.

\begin{definition}
\label{def:hypergraph-state}
The \emph{hypergraph state} associated to $G=(V,E)$ is given by \cite{Rossi2013Hypergraph,Qu2013Encoding}
\begin{equation}
\ket{G}
\;\coloneqq\;
\Big(\prod_{e\in E} CZ_e\Big)\ket{+}^{\otimes n},
\label{eq:hypergraph-state}
\end{equation}
where $\ket{+}=(\ket{0}+\ket{1})/\sqrt{2}$.
\end{definition}

In what follows, we will encode the diagonal phases of the hypergraph state by a Boolean function. Any Boolean function $f:\{0,1\}^n\to\{0,1\}$ admits a unique algebraic normal form (ANF) over $\mathbb{F}_2$ \cite{ODonnell2014AOBF} given by
\begin{equation}
f(x)
=
\bigoplus_{S\subseteq[n]} a_S \prod_{i\in S} x_i,
\qquad
a_S\in\{0,1\}.
\label{eq:anf}
\end{equation}
For a hypergraph $G=(V,E)$, each hyperedge contributes the monomial term $\prod_{i\in e} x_i$, yielding the associated phase polynomial
\begin{equation}
f_G(x)
\;=\;
\bigoplus_{e\in E}\;\prod_{i\in e} x_i.
\label{eq:fG}
\end{equation}
Conversely, any Boolean function in an ANF of the form Eq.~\eqref{eq:fG} specifies a hypergraph \cite{Dutta2018Boolean}. Using this, we can rewrite hypergraph states in what is known as the real equally weighted representation (REW) \cite{Rossi2013Hypergraph} where all computational basis amplitudes have a magnitude $2^{-n/2}$ and where the Boolean phase function dictates the sign pattern \cite{Rossi2013Hypergraph,GuHne2014Hypergraph}. In particular, one can write the standard expansion

\begin{equation}
\ket{G} = 2^{-n/2}\sum_{x\in\{0,1\}^n} (-1)^{f_G(x)}\,\ket{x}.
\label{eq:rew}
\end{equation}
This identity follows directly from the definition Eq.~\eqref{eq:hypergraph-state} and is stated, for example, in \cite{Rossi2013Hypergraph,Dutta2018Boolean,ChenWang2014FourQubit}.

% ---------------------------------------------------------
\subsection{Bipartitions, coefficient matrices, and phase cleaning}
\label{subsec:prelim-bipartition}

We will fix a bipartition of our underlying hypergraph and denote it $A|B$, by which we mean two disjoint sets of qubits for which  $A\cup B=[n]$ and $A\cap B=\varnothing$. After fixing an ordering for all qubits in $A$ and $B$, every string of bits $x\in\{0,1\}^n$ can be written uniquely as $x = (u,v)$ with $u\in\{0,1\}^a$ and $v\in\{0,1\}^b$, where $u$ and $v$ represent the bits in $A$ and $B$, respectively.

Any pure state $\ket{\psi}\in \mathcal{H}_A\otimes\mathcal{H}_B$ can be expanded as
\begin{equation}
\ket{\psi} = \sum_{u\in\{0,1\}^a}\sum_{v\in\{0,1\}^b} M_{u,v}\,\ket{u}_A\ket{v}_B,
\label{eq:psi-M}
\end{equation}

where $M\in\mathbb{C}^{2^a\times 2^b}$ is the \emph{coefficient matrix} across the cut. In this form, the subsystem states are given by 
\begin{equation}
\begin{aligned}
\rho_A &\coloneqq \operatorname{Tr}_B(\ket{\psi}\!\bra{\psi}) = M M^\dagger,\\
\rho_B &\coloneqq \operatorname{Tr}_A(\ket{\psi}\!\bra{\psi}) = M^\dagger M.
\end{aligned}
\label{eq:rho-MMdag}
\end{equation}
thus the singular values of $M$ represent the Schmidt coefficients of $\ket{\psi}$, and in particular the Schmidt rank across the cut $A|B$ is equal to $\mathrm{rank}(M)$. For the hypergraph state Eq.~\eqref{eq:rew}, the coefficient matrix takes the form
\begin{equation}
M_{u,v}
=
2^{-n/2}(-1)^{f_G(u,v)}\in\mathbb{R}.
\label{eq:M-hypergraph}
\end{equation}
Since $M$ is real for hypergraph states, we may replace $\dagger$ by $\mathsf{T}$ and use the transpose instead.

The ANF of $f_G$ decomposes relative to the cut $A|B$ into three components 
\begin{equation}
f_G(u,v)
=
f_A(u)\ \oplus\ f_B(v)\ \oplus\ f_{AB}(u,v),
\label{eq:phase-decomp}
\end{equation}
where $f_A$ contains only the monomial terms supported on variables in $A$, $f_B$ represents the terms from variables in $B$ while $f_{AB}$ contains all remaining monomials that consist of at least one variable from each side of the bipartition. Any constant term can either be integrated into this form or ignored as part of a global phase. 

In the rest of the paper, we are interested solely in the cross-cut terms and will thus remove the $f_G$ components that belong to only one bipartition. This can be done by defining two diagonal matrices on $\mathcal{H}_A$ and $\mathcal{H}_B$ as
\begin{equation}
(D_A)_{u,u} \coloneqq (-1)^{f_A(u)},
\qquad
(D_B)_{v,v} \coloneqq (-1)^{f_B(v)}.
\label{eq:DA-DB}
\end{equation}
and observe that the coefficient matrix of Eq.~\eqref{eq:M-hypergraph} factorizes as
\begin{equation}
M
=
2^{-n/2}\, D_A\, R\, D_B,
\qquad
R_{u,v}\coloneqq (-1)^{f_{AB}(u,v)}\in\{\pm1\}.
\label{eq:M-factor}
\end{equation}
As the diagonal matrices $D_A$ and $D_B$ have no effect on the singular values of $M$ or on its rank, when computing cross-cut entanglement, one must only focus on the \emph{phase-cleaned sign matrix} $R$ ~\cite{Appel2022FFE}.

\begin{lemma}
\label{lem:phase-cleaning}
Let $U_A$ and $U_B$ be unitaries acting on $A$ and $B$, respectively. Then the coefficient matrices of $\ket{\psi}$ and $(U_A\otimes U_B)\ket{\psi}$ are related by $M\mapsto U_A M U_B^{\mathsf{T}}$ (in the computational bases), and in particular $M$ and $U_A M U_B^{\mathsf{T}}$ share the same singular values and the same rank.
For hypergraph states, the diagonal matrices $D_A$ and $D_B$ in Eq.~\eqref{eq:M-factor} are also unitaries, hence all bipartite entanglement properties across $A|B$ depend exclusively on $R$ (thus on the cross-cut polynomial $f_{AB}$).
\end{lemma}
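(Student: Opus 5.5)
The plan is a direct computation in the computational basis, followed by standard facts about singular values under unitary multiplication.

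First I will establish the transformation rule. Write $\ket{\psi}=\sum_{u,v}M_{u,v}\ket{u}_A\ket{v}_B$ and apply $U_A\otimes U_B$ term by term. Using $U_A\ket{u}=\sum_{u'}(U_A)_{u',u}\ket{u'}$ and $U_B\ket{v}=\sum_{v'}(U_B)_{v',v}\ket{v'}$, the new coefficient is
\[
M'_{u',v'}=\sum_{u,v}(U_A)_{u',u}M_{u,v}(U_B)_{v',v}=(U_A M U_B^{\mathsf T})_{u',v'}.
\]
Since $U_B^{\mathsf T}$ is unitary whenever $U_B$ is (because $(U_B^{\mathsf T})^\dagger U_B^{\mathsf T}=(U_B U_B^\dagger)^{\mathsf T}=I$), we get $M'M'^\dagger=U_A MM^\dagger U_A^\dagger$. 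This is unitarily similar to $MM^\dagger$, so the two matrices have the same eigenvalues, and hence $M$ and $M'$ have the same singular values. Rank is preserved because multiplying on either side by an invertible matrix does not change it. Equivalently, if $M=W\Sigma Y^\dagger$ is a singular value decomposition, then $(U_AW)\Sigma(\overline{U_B}Y)^\dagger$ is a singular value decomposition of $M'$.

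Second, for hypergraph states I will note that $D_A$ and $D_B$ are diagonal with entries $\pm1$. They are therefore real orthogonal, unitary, and symmetric, so $D_B^{\mathsf T}=D_B$. The factorization in Eq.~\eqref{eq:M-factor} is a pointwise identity, since $(-1)^{f_A\oplus f_B\oplus f_{AB}}$ equals the product of the three signs. It exhibits $2^{n/2}M$ as $D_A R D_B^{\mathsf T}$, which is exactly the image of $R$ under the local unitary $D_A\otimes D_B$ by the first part.

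Applying the first part then shows that $M$ and $2^{-n/2}R$ share singular values, so the Schmidt coefficients and the Schmidt rank $\mathrm{rank}(R)$ are determined by $R$ alone. Every bipartite entanglement property is a function of the Schmidt coefficients, and $R$ depends only on $f_{AB}$, which gives the last claim. I do not expect any real obstacle here. The only point needing care is the index convention that produces a transpose rather than a dagger on $U_B$, together with checking that the transpose of a unitary is again unitary.
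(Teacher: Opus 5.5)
Your proposal is correct and follows the paper's argument: singular values and rank are invariant under multiplication by unitaries (your SVD $(U_AW)\Sigma(\overline{U_B}Y)^\dagger$ is exactly the paper's form), and this is then applied to the $\pm1$ diagonal unitaries $D_A$, $D_B$ in Eq.~\eqref{eq:M-factor}. Beyond the paper, you also derive the index computation behind $M\mapsto U_A M U_B^{\mathsf T}$ and verify that $U_B^{\mathsf T}$ is unitary, both of which the paper's proof leaves implicit.
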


\begin{proof}
Left and right multiplication by unitaries does not affect singular values: if $M=U\Sigma V^\dagger$ is a singular value decomposition (SVD), then $(U_A M U_B^{\mathsf{T}})=(U_A U)\Sigma ( (U_B^*) V)^\dagger$ is another SVD with the same diagonal matrix $\Sigma$. Rank is also preserved because unitaries are invertible. Applying this to Eq.~\eqref{eq:M-factor} shows that the Schmidt spectrum (hence any bipartite entanglement measure for a pure state) is unchanged by multiplying rows and columns of $R$ by phases, and therefore depends only on $R$.
\end{proof}
An important result of Lemma~\ref{lem:phase-cleaning} is that the Schmidt rank across the cut $A|B$ can be computed only in terms of the matrix $R$
\begin{equation}
\mathrm{SR}_{A|B} = \mathrm{rank}(M) = \mathrm{rank}(R).
\label{eq:SR-rankR-prelim}
\end{equation}

As the global system is described by the pure bipartite state Eq.~\eqref{eq:psi-M}, the degree to which one of the subsystem states, either $\rho_A$ or $\rho_B$, is mixed relates directly to the entanglement across the cut $A|B$. In the next section we will focus on one of the subsystem states $\rho_A$ (as both $\rho_A$ and $\rho_B$ share the same nonzero eigenvalue spectrum in this case), and use the amount of correlation between the rows of the corresponding matrix to determine the degree to which the state is mixed. We will see that for the simplest case in which the bipartition $A|B$ is crossed only by 2-edges (regular graph edges), we recover a well-known result for graph states. The same approach does not work, however, if higher degree edges cross the cut. We then propose a method by which we can provide a lower limit in this case. This represents the main result of our paper.

\section{Bilinear Slice and the Graph-State Cut-Rank Rule}
\label{sec:baseline}

The REW form introduced in the previous section offers a clear separation of bipartite entanglement across the cut $A|B$, through the phase-cleaned sign matrix $R_{u,v}=(-1)^{f_{AB}(u,v)}$. A first approximation would be to retain only the \emph{degree-$2$} cross-cut part of $f_{AB}$, which can be used to define a binary matrix $\Gamma_{AB}$, encoding the cross-cut edges modulo 2. In the case of graph states, where the cross-cut phase is exactly bilinear, the matrix $\Gamma_{AB}$ completely determines the Schmidt rank across $A|B$. For more general hypergraph states, however, the same bilinear slice can either be silent (in the absence of bilinear terms) or misleading. 

\subsection{Bilinear slice and cut-incidence matrix $\Gamma_{AB}$}
\label{subsec:bilinear-slice}

Writing the cross-cut phase polynomial component $f_{AB}$ in its ANF form, and keeping only monomials of total degree $2$ which contain one variable from $A$ and one from $B$, we can define the bilinear slice
\begin{equation}
f^{(2)}_{AB}(u,v)
\;=\;
\bigoplus_{i\in A}\ \bigoplus_{j\in B}\ \gamma_{ij}\, u_i v_j.
\qquad
\gamma_{ij}\in\{0,1\},
\label{eq:bilinear-slice}
\end{equation}
The coefficients $\gamma_{ij}$ then define an associated cut-incidence matrix
\begin{equation}
\Gamma_{AB}\coloneqq[\gamma_{ij}] \in \mathbb{F}_2^{a\times b},
\label{eq:GammaAB}
\end{equation}
behaving as an effective cross-adjacency matrix for the cut $A|B$. Equivalently, after identifying $u\in\mathbb{F}_2^a$ and $v\in\mathbb{F}_2^b$ as column vectors, one may also write
\begin{equation}
f^{(2)}_{AB}(u,v) \;=\; u^\top \Gamma_{AB}\, v \quad (\mathrm{mod}\ 2).
\label{eq:bilinear-form}
\end{equation}
Intuitively, $\Gamma_{AB}$ records which \emph{cross} $2$-hyperedges bridge the cut, with coefficients taken modulo $2$.

\subsection{Bilinear case: $\mathrm{SR}_{A|B}=2^{\mathrm{rank}_{\mathbb{F}_2}(\Gamma_{AB})}$ and limitations}
\label{subsec:bilinear-cutrank}

We may now state and prove the bipartite entanglement rank rule for graph states, expressed using the introduced notation. It is convenient to first define the \emph{log-Schmidt-rank}
\begin{equation}
\begin{aligned}
E_{\mathrm{SR}}(A|B)
&\;\coloneqq\;\log_2 \mathrm{SR}_{A|B}\\
&\;=\;\log_2 \mathrm{rank}(M)\\
&\;=\;\log_2 \mathrm{rank}(R).
\end{aligned}
\label{eq:ESR-def}
\end{equation}
where the last equality uses phase cleaning (Sec.~\ref{subsec:prelim-bipartition}) and Eq.~\eqref{eq:SR-rankR-prelim}.

\begin{theorem}
\label{thm:bilinear-cutrank}
Let $A|B$ be a bipartition with $|A|=a$ and $|B|=b$, and let $\Gamma_{AB}\in\mathbb{F}_2^{a\times b}$ be the bilinear-slice matrix defined in Eq.~\eqref{eq:GammaAB}.
Assume also that the \emph{entire} cross-cut phase is purely bilinear, meaning it can be written as 
\begin{equation}
\begin{aligned}
f_{AB}(u,v) &\;=\; f^{(2)}_{AB}(u,v)\\
&\;=\; u^\top \Gamma_{AB} v \qquad (\mathrm{mod}\ 2).
\end{aligned}
\label{eq:pure-bilinear-assumption}
\end{equation}
and let $k\coloneqq \mathrm{rank}_{\mathbb{F}_2}(\Gamma_{AB})$. Then we have that
\begin{equation}
E_{\mathrm{SR}}(A|B)=k,
\qquad
\mathrm{SR}_{A|B}=2^k.
\label{eq:SR-cutrank}
\end{equation}
In particular, for graph states this reproduces the standard Schmidt-rank rule \cite[Prop.~10]{Hein2006}. For completeness, we also provide a sketch proof in Appendix~\ref{app:th1-proof}.
\end{theorem}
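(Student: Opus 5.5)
By Eq.~\eqref{eq:SR-rankR-prelim} it suffices to show $\mathrm{rank}_{\mathbb{R}}(R)=2^k$ for the phase-cleaned sign matrix $R_{u,v}=(-1)^{u^\top\Gamma_{AB}v}$. The plan is to reduce $\Gamma_{AB}$ to a canonical form using invertible changes of variables on each side of the cut. These act on $R$ as permutations of rows and columns. Then I would identify the resulting matrix as a tensor product of $2\times 2$ Hadamard-type blocks with all-ones blocks.

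\emph{Step 1 (normal form).} Over $\mathbb{F}_2$ there exist invertible $P\in GL_a(\mathbb{F}_2)$ and $Q\in GL_b(\mathbb{F}_2)$ with $P^\top\Gamma_{AB}Q=\begin{pmatrix} I_k & 0\\ 0 & 0\end{pmatrix}$ (Smith/rank normal form). Substituting $u=Pu'$ and $v=Qv'$ gives $u^\top\Gamma_{AB}v=u'^\top (P^\top\Gamma_{AB}Q) v'=\bigoplus_{i=1}^k u'_i v'_i$. Since $u\mapsto P^{-1}u$ is a bijection of $\{0,1\}^a$, it permutes the rows of $R$, and similarly $Q$ permutes the columns. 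Permutation matrices are real orthogonal, hence unitary, so by Lemma~\ref{lem:phase-cleaning} (or directly) the rank is unchanged. Physically, these are CNOT-type permutation circuits acting locally on $A$ and on $B$.

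\emph{Step 2 (tensor factorization).} With the canonical form, after ordering bits, $R'_{u',v'}=\prod_{i=1}^k(-1)^{u'_iv'_i}$. This means
\[
R'=H^{\otimes k}\otimes J_{2^{a-k}\times 2^{b-k}},\qquad H=\begin{pmatrix}1&1\\1&-1\end{pmatrix},
\]
where $J$ is the all-ones matrix. Because $\mathrm{rank}(X\otimes Y)=\mathrm{rank}(X)\,\mathrm{rank}(Y)$, $\mathrm{rank}(H)=2$ and $\mathrm{rank}(J)=1$, we obtain $\mathrm{rank}(R)=2^k$. Therefore $\mathrm{SR}_{A|B}=2^k$ and $E_{\mathrm{SR}}=k$. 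For graph states, $f_{AB}$ consists only of cross $2$-edges, so the hypothesis Eq.~\eqref{eq:pure-bilinear-assumption} holds automatically and we recover \cite[Prop.~10]{Hein2006}.

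The only delicate point is Step~1. One must check that a linear change of variables over $\mathbb{F}_2$ really implements a row/column permutation of $R$, so that the real rank is preserved. The concern is the mix of $\mathbb{F}_2$-linear algebra with real rank. The bijectivity argument above settles it. An alternative would be to compute $RR^\top$ directly via character sums, $\sum_v(-1)^{(u\oplus w)^\top\Gamma v}=2^b[\Gamma^\top(u\oplus w)=0]$. This gives $RR^\top=2^b$ times the indicator of the cosets of $\ker\Gamma^\top$, a block matrix with $2^{a}/|\ker\Gamma^\top|=2^k$ nonzero all-ones blocks. That also yields rank $2^k$ and could serve as a cross-check.
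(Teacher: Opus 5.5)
Your proof is correct, but it takes a different route from the paper's. The paper's appendix proof never changes variables. It views each row of $R$ as the Walsh character $\chi_u(v)=(-1)^{u^\top\Gamma_{AB}v}$ and notes that $\chi_u$ depends only on $\Gamma_{AB}^\top u$, so there are exactly $2^k$ distinct rows. It then uses character orthogonality, $\sum_v(-1)^{(u\oplus u')^\top\Gamma_{AB}v}\in\{0,2^b\}$, to show these distinct rows are mutually orthogonal, hence linearly independent, which gives $\mathrm{rank}(R)=2^k$. Your $RR^\top$ cross-check is essentially this same argument written as a Gram matrix.

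Your main argument works differently. You bring $\Gamma_{AB}$ to rank normal form by $\mathbb{F}_2$-linear bijections on each side, which act on $R$ as row and column permutations (local CNOT circuits). You then read off $R\cong H^{\otimes k}\otimes J$ and use multiplicativity of rank under Kronecker products. All the steps check out, including the degenerate cases $k=0$ and $k=\min(a,b)$.

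What your route buys is an explicit structural picture. It shows the state is locally equivalent to $k$ two-qubit maximally entangled pairs times a product state. Since $H^\top H=2I$ and $J$ has a single nonzero singular value, it also gives the whole Schmidt spectrum at once: all $2^k$ Schmidt coefficients equal $2^{-k/2}$. So the entanglement entropy is also $k$, which is strictly more than the rank statement.

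The paper's route is more elementary: it needs no normal-form theorem and no change of basis. It rests only on character orthogonality, which is the viewpoint the paper later invokes when discussing its core certificates.
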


In the case of graph states, $f_G$ is quadratic and the $f_{AB}$ component is bilinear. The cut-incidence matrix $\Gamma_{AB}$ coincides in this case with the $A\times B$ adjacency block (the $|A| \times |B|$ sized block of the graph adjacency matrix) where entry $\gamma_{i,j}$ is 1 if and only if the corresponding $(i,j)$ edge crosses the cut. The $\mathrm{rank}_{\mathbb{F}_2}(\Gamma_{AB})$ is also precisely the \emph{cut-rank} of the graph across the cut.

 When all cross-cut monomials have degree $2$ (equivalently, $f_{AB}=f^{(2)}_{AB}$), Theorem~\ref{thm:bilinear-cutrank} is exact and provides a complete Schmidt-rank characterization across $A|B$ \cite{Hein2006}. If however $f_{AB}$ contains only higher-degree cross monomials (for example terms such as $u_i v_j v_k$), then $\Gamma_{AB}=0$ and the bilinear slice provides no information, even though $\mathrm{SR}_{A|B}$ can be large (see ~Example~\ref{ex:pure-3edge-vanishing-bilinear-slice}). Finally, if higher-degree cross terms are present together with terms of degree $2$, the rank of the real sign matrix  $R_{u,v}=(-1)^{f_{AB}(u,v)}$ can show a substantial deviation from the bilinear prediction $2^{\mathrm{rank}_{\mathbb{F}_2}(\Gamma_{AB})}$. For example, adding cubic (or higher) cross terms can \emph{increase or decrease} $\mathrm{rank}(R)$ depending on interference between phases. Appendix~\ref{app:rank3-counterexample} gives an explicit $4\times 4$ instance with $\mathrm{rank}_{\mathbb{F}_2}(\Gamma_{AB})=2$ but $\mathrm{SR}_{A|B}=3<4$.

\begin{example}[Two independent cross-$CZ$ bridges]
\label{ex:two-bridges-bilinear-slice}
Let $A=\{1,2\}$ and $B=\{3,4\}$, and suppose the only cross-cut monomials are
\[
f^{(2)}_{AB}(u,v)=u_1 v_3 \oplus u_2 v_4.
\]
Then
\[
\Gamma_{AB}=
\begin{pmatrix}
1 & 0\\
0 & 1
\end{pmatrix},
\qquad
\mathrm{rank}_{\mathbb{F}_2}(\Gamma_{AB})=2.
\]
If no higher-degree cross monomials are present (such that $f_{AB}=f^{(2)}_{AB}$), the cut-rank rule of Sec.~\ref{subsec:bilinear-cutrank} yields $\mathrm{SR}_{A|B}=4$.
\end{example}

\begin{example}[Pure cross $3$-edge: vanishing bilinear slice]
\label{ex:pure-3edge-vanishing-bilinear-slice}
Let $A=\{1\}$ and $B=\{2,3\}$, and take a single cross hyperedge $\{1,2,3\}$.
Then $f_{AB}(u,v)=u_1 v_2 v_3$ has no degree-$2$ cross terms, so $f^{(2)}_{AB}\equiv 0$ and $\Gamma_{AB}=0$.
Nevertheless the state is generally entangled across $A|B$; in Section~\ref{sec:results} we show how appropriate restrictions (interpretable as postselected $Z$-measurement outcomes  \cite{Gachechiladze2017Graphical,Xiong2018QuditHypergraphStates}) can expose bilinear structure on an active slice even when $\Gamma_{AB}=0$ globally.
\end{example}

These limitations indicate that $\Gamma_{AB}$ should be seen as a baseline feature extraction procedure, rather than as a general cross-cut entanglement estimator for hypergraph states. Theorem~\ref{thm:bilinear-cutrank} can however prove itself useful once more if one can identify a submatrix of $R$ (by for example restricting some qubits to fixed Z-basis values) on which the remaining cross-cut phase is purely bilinear (up to cut-local terms). In this case, the same arguments can be used to derive an exponential rank lower bound. In Section~\ref{sec:results} we provide combinatorial conditions as well as an algorithm for constructing such \textit{residual-free bilinear cores} of $R$ using a single global restriction.

\section{Restriction-Based Residual-Free Bilinear Cores}
\label{sec:results}

In Section~\ref{sec:baseline} we saw that when the cross-cut phase term is purely bilinear, the Schmidt rank across the bipartition $A|B$ is given by the $\mathbb{F}_2$-rank of the cut-incidence matrix $\Gamma_{AB}$. For more general hypergraph states however, higher-degree cross-terms may dominate and the global bilinear slice may vanish, even if the state still has a large Schmidt rank. In this section we present a restriction-based method that recovers the bilinear structure on an \textit{active slice} of variables, while ensuring that no higher-degree terms are present on that slice. This provides a rigorous exponential lower bound on the Schmidt rank, based on a submatrix of $R$. Our use of phase cleaning and diagonal normal forms is motivated by the dephased normal form viewpoint of Ref.~\cite{Appel2022FFE}.

\subsection{Residual-free bilinear cores and a Schmidt-rank certificate}
\label{subsec:cores-certificate}

We have previously seen how the Schmidt rank across a cut $A|B$ is given by the rank of the phase-cleaned coefficient matrix $R$, as per Eq.~\eqref{eq:SR-rankR-prelim}. Consequently, any entanglement certificate based on a large-rank submatrix of $R$ yields a Schmidt-rank lower bound for the original state. In this section we will use this approach to address the situation in which the edges across the cut $A|B$ have degree higher than $2$. Before providing the certificates however, it is important to define the notion of \textit{active slice} as well as that of a \textit{Boolean derivative} and its effects. 

\paragraph*{Restrictions and active slices.}
Let $I\subseteq A$ and $J\subseteq B$ be sets of \emph{active} indices, and let $\beta\in\{0,1\}^{A\setminus I}$ and $\alpha\in\{0,1\}^{B\setminus J}$ be fixed assignments of the \emph{non-active} variables.
Write $u_I\in\{0,1\}^{|I|}$ and $v_J\in\{0,1\}^{|J|}$ for the remaining free bits.
The corresponding restriction of the cross-cut phase polynomial is
\begin{equation}
\widetilde f^{(I,J;\beta,\alpha)}_{\mathrm{core}}(u_I,v_J)
\;\coloneqq\;
f_{AB}\big(u_I,\ u_{A\setminus I}=\beta;\ v_J,\ v_{B\setminus J}=\alpha\big).
\label{eq:core-restriction-results}
\end{equation}
From the matrix point of view, fixing $(u_{A\setminus I},v_{B\setminus J})=(\beta,\alpha)$ selects a $2^{|I|}\times 2^{|J|}$ submatrix of $R$, indexed by $(u_I,v_J)$.
The guiding principle is to choose $(I,J;\beta,\alpha)$ so that $\widetilde f_{\mathrm{core}}$ is bilinear in $(u_I,v_J)$ up to cut-local terms.

\paragraph*{Boolean derivatives as a constructive heuristic.}
The certificate is formulated in terms of imposed restrictions and remaining monomials, however it is helpful to recall two elementary facts that aid in the search for good restrictions. For a Boolean polynomial $f$, written in its ANF, the Boolean derivative with respect to a variable $x_i$ behaves as a discrete difference
\begin{equation}
\Delta_{x_i} f(x) \;\coloneqq\; f(x)\oplus f(x\oplus e_i),
\label{eq:boolean-derivative}
\end{equation}
where $e_i$ flips the $i$th bit.
If $m(x)=\prod_{t\in T} x_t$ is a monomial, then
\begin{equation}
\Delta_{x_i} m(x)
=
\begin{cases}
\prod_{t\in T\setminus\{i\}} x_t, & i\in T,\\
0, & i\notin T,
\end{cases}
\label{eq:derivative-peels-factor}
\end{equation}
so a derivative with respect to an $A$-variable can eliminate that factor from a cross monomial. A second important fact is that any pure-$B$ monomial $\prod_{j\in T} v_j$ can be linearized into a single variable by setting all but one of its variables to $1$ and setting all other $B$-variables outside $T$ to $0$; concretely, fixing an arbitrary $j^\star\in T$,
\begin{equation}
\begin{aligned}
v_j &\leftarrow 1 && \text{for } j\in T\setminus\{j^\star\},\\
v_j &\leftarrow 0 && \text{for } j\in B\setminus T,\\
\prod_{j\in T} v_j &\mapsto v_{j^\star}. &&
\end{aligned}
\label{eq:linearize-monomial}
\end{equation}
These observations suggest that cross monomials of the form $u_i\prod_{j\in T}v_j$ can be turned into bilinear bridges $u_i v_{j^\star}$ by a suitable restriction on $B$.
The main obstacle is that a rank certificate requires a \emph{single global restriction} that simultaneously linearizes many such monomials and also eliminates any higher-degree cross residuals on the active variables.

\begin{definition}
\label{def:residualfree-core-results}
Let $I=\{i_1,\dots,i_r\}\subseteq A$ and $J=\{j_1,\dots,j_p\}\subseteq B$ be sets of distinct indices, with fixed orderings, and define
\[
\begin{aligned}
u_I &= (u_{i_1},\dots,u_{i_r})\in\{0,1\}^r,\\
v_J &= (v_{j_1},\dots,v_{j_p})\in\{0,1\}^p.
\end{aligned}
\]
For restrictions $(\beta,\alpha)\in\{0,1\}^{A\setminus I}\times \{0,1\}^{B\setminus J}$, define the restricted cross phase $\widetilde f^{(I,J;\beta,\alpha)}_{\mathrm{core}}$ by Eq.~\eqref{eq:core-restriction-results}.
We say that $(I,J;\beta,\alpha)$ exposes a \emph{residual-free bilinear core} if there exist a matrix $\Gamma_{\mathrm{core}}\in\mathbb{F}_2^{r\times p}$ and Boolean polynomials $\phi:\{0,1\}^r\to\{0,1\}$ and $\psi:\{0,1\}^p\to\{0,1\}$ such that
\begin{equation}
\widetilde f^{(I,J;\beta,\alpha)}_{\mathrm{core}}(u_I,v_J)
=
u_I^\top \Gamma_{\mathrm{core}}\, v_J
\ \oplus\ \phi(u_I)\ \oplus\ \psi(v_J).
\label{eq:core-decomposition-results}
\end{equation}

Equivalently, after applying the restriction $(u_{A\setminus I}=\beta,\ v_{B\setminus J}=\alpha)$, no cross monomial of total degree $\ge 3$ survives on the active variables $I\cup J$ \cite{Hein2004Multiparty, Rossi2013Hypergraph}.
\end{definition}
The decomposition in Eq.~\eqref{eq:core-decomposition-results} captures the situation in which the induced sign matrix differs from a purely bilinear character matrix only by diagonal row/column phases (such as the dephased normal form viewpoint of Ref.~\cite{Appel2022FFE}). In Eq.~\eqref{eq:core-decomposition-results} the restricted phase is decomposed into the bilinear core $\Gamma_{\mathrm{core}} \in \mathbb{F}_2^{r \times p}$, which encodes the surviving bilinear cross terms, and two functions $\phi_{u_I}$ and $\psi_{v_J}$ which are generic functions local to the A or B partition respectively. The important point is that no cross-monomial of degree greater than $2$ remains in the active variables. This is what we mean by the \textit{residual-free condition}. Concretely, there are four types of terms that can form the restricted cross phase: \textit{bilinear cross-terms} such as $u_{i_k}v_{j_l}$ for $i_k \in I$ and $j_l \in J$, which enter the $\Gamma_{\mathrm{core}}$ matrix; \textit{Local-I}, representing any monomial in $u_I$ variables only, which are absorbed into $\phi_{u_I}$; \textit{Local-J}, similarly representing monomials in $v_J$ only, included into $\psi_{v_J}$; and finally \textit{higher-degree cross residuals} such as $u_{i_k}u_{i_l}v_{j_m}$, which must be absent from the above expression. If the last category of terms is missing from Eq.~\eqref{eq:core-decomposition-results}, the core is \textit{residual-free} and the decomposition holds. Being able to write $\widetilde f^{(I,J;\beta,\alpha)}_{\mathrm{core}}(u_I,v_J)$ in the form of Eq.~\eqref{eq:core-decomposition-results} is equivalent to not having any monomial of degree greater than $2$ across the cut.

\begin{remark}
\label{rem:beta-alpha-explicit}
It may be tempting to set $\beta=\mathbf{0}$ so that any cross monomial touching an $A$-variable outside $I$ is removed automatically. We choose to keep $(\beta,\alpha)$ explicit because operationally and algorithmically there is no need to hard-code this choice: any single restriction that yields Eq.~\eqref{eq:core-decomposition-results} provides a valid certificate.
\end{remark}

\begin{theorem}
\label{thm:core-certificate-results}
Assume that there exist sets $I\subseteq A$ and $J\subseteq B$ with $|I|=|J|=r$ and a single restriction pair $(\beta,\alpha)\in\{0,1\}^{A\setminus I}\times \{0,1\}^{B\setminus J}$ such that $(I,J;\beta,\alpha)$ exposes a residual-free bilinear core in the sense of Definition~\ref{def:residualfree-core-results}, with core matrix $\Gamma_{\mathrm{core}}$ in Eq.~\eqref{eq:core-decomposition-results}.
Let
\begin{equation}
t \;\coloneqq\; \mathrm{rank}_{\mathbb{F}_2}(\Gamma_{\mathrm{core}}).
\label{eq:t-core-rank}
\end{equation}
Then $E_{\mathrm{SR}}(A|B)\ \ge\ t$ or, equivalently $\mathrm{SR}_{A|B}\ \ge\ 2^t$.
In particular, if $\Gamma_{\mathrm{core}}$ has full $\mathbb{F}_2$-rank $r$, then $\mathrm{SR}_{A|B}\ge 2^r$.
\end{theorem}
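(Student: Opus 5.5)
The plan is to reduce the statement to two facts: the rank of a matrix is at least the rank of any of its submatrices, and the exact cut-rank rule of Theorem~\ref{thm:bilinear-cutrank} can be applied to the submatrix exposed by the restriction. By Eq.~\eqref{eq:SR-rankR-prelim} we have $\mathrm{SR}_{A|B}=\mathrm{rank}(R)$. So it suffices to exhibit a submatrix of $R$ with real rank $2^t$.

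\textbf{Step 1: the restricted submatrix.} Fix the restriction $(u_{A\setminus I},v_{B\setminus J})=(\beta,\alpha)$. Let $R_{\mathrm{core}}$ be the $2^{r}\times 2^{r}$ submatrix of $R$ with rows indexed by $u_I$ (with $u_{A\setminus I}=\beta$) and columns indexed by $v_J$ (with $v_{B\setminus J}=\alpha$). By Eq.~\eqref{eq:core-restriction-results}, its entries are $(R_{\mathrm{core}})_{u_I,v_J}=(-1)^{\widetilde f_{\mathrm{core}}(u_I,v_J)}$. Since any set of linearly independent rows of a submatrix extends to linearly independent rows of the full matrix, $\mathrm{rank}(R)\ge\mathrm{rank}(R_{\mathrm{core}})$.

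\textbf{Step 2: phase cleaning on the core.} Apply the residual-free decomposition Eq.~\eqref{eq:core-decomposition-results}. This gives $R_{\mathrm{core}}=D_\phi\,C\,D_\psi$, where $D_\phi=\mathrm{diag}((-1)^{\phi(u_I)})$ and $D_\psi=\mathrm{diag}((-1)^{\psi(v_J)})$ are $\pm1$ diagonal matrices, and $C_{u_I,v_J}=(-1)^{u_I^\top\Gamma_{\mathrm{core}}v_J}$. Diagonal sign matrices are invertible, as in Lemma~\ref{lem:phase-cleaning}, so $\mathrm{rank}(R_{\mathrm{core}})=\mathrm{rank}(C)$.

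\textbf{Step 3: rank of the bilinear character matrix.} The matrix $C$ is exactly the phase-cleaned sign matrix of a state on $r+r$ qubits whose cross phase is purely bilinear with incidence matrix $\Gamma_{\mathrm{core}}$. Theorem~\ref{thm:bilinear-cutrank} therefore gives $\mathrm{rank}(C)=2^{t}$. For a self-contained argument, write $C=H^{\otimes r}$-type character sums. Row $u$ of $C$ is the character $v\mapsto(-1)^{(\Gamma_{\mathrm{core}}^\top u)\cdot v}$, so it depends only on $w=\Gamma_{\mathrm{core}}^\top u$. Distinct characters of $\mathbb{F}_2^{r}$ are orthogonal, hence linearly independent. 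The number of distinct rows is $|\mathrm{im}\,\Gamma_{\mathrm{core}}^\top|=2^{t}$, which gives $\mathrm{rank}(C)=2^t$.

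Combining the three steps gives $\mathrm{SR}_{A|B}\ge 2^t$, and $t=r$ in the full-rank case. The only point needing care is Step 2: the cut-local pieces $\phi,\psi$ after restriction may contain contributions from monomials of $f_{AB}$ that became local once variables were fixed. These are still only functions of one side, so they factor as diagonal signs. The assumption $|I|=|J|$ is not needed for this argument.
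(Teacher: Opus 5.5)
Your proposal is correct and follows essentially the same route as the paper: restrict to the $2^r\times 2^r$ submatrix of $R$ fixed by $(\beta,\alpha)$, strip the cut-local signs $(-1)^{\phi(u_I)}$ and $(-1)^{\psi(v_J)}$ as diagonal row/column multiplications, invoke the bilinear cut-rank rule (Theorem~\ref{thm:bilinear-cutrank}) to get rank $2^t$, and conclude via $\mathrm{rank}(R)\ge\mathrm{rank}(R')$ and $\mathrm{SR}_{A|B}=\mathrm{rank}(R)$. Your inline character-orthogonality argument reproduces the paper's Appendix~\ref{app:th1-proof}, and you are right that the assumption $|I|=|J|$ plays no role.
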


\begin{proof}
We will fix $(I,J;\beta,\alpha)$ as per the hypothesis and consider the $2^r\times 2^r$ submatrix $R'$ of $R$ obtained by restricting to rows with $u_{A\setminus I}=\beta$ and columns with $v_{B\setminus J}=\alpha$, while allowing $(u_I,v_J)$ to range freely.
Thus, by construction,
\[
R'_{u_I,v_J} = (-1)^{\widetilde f^{(I,J;\beta,\alpha)}_{\mathrm{core}}(u_I,v_J)}.
\]
Using \eqref{eq:core-decomposition-results}, we can factor
\[
R'_{u_I,v_J}
=
(-1)^{\phi(u_I)}\,(-1)^{u_I^\top\Gamma_{\mathrm{core}} v_J}\,(-1)^{\psi(v_J)}.
\]
Multiplying rows by $(-1)^{\phi(u_I)}$ and columns by $(-1)^{\psi(v_J)}$ does not change matrix rank, so $\mathrm{rank}(R')$ equals the rank of the purely bilinear sign matrix
\[
R''_{u_I,v_J} \;=\; (-1)^{u_I^\top\Gamma_{\mathrm{core}} v_J}.
\]
By the bilinear cut-rank rule (Theorem~\ref{thm:bilinear-cutrank}) applied to the $r$ active bits, $\mathrm{rank}(R'')=2^{\mathrm{rank}_{\mathbb{F}_2}(\Gamma_{\mathrm{core}})}=2^t$, hence $\mathrm{rank}(R')=2^t$.
Since $R'$ is a submatrix of $R$, $\mathrm{rank}(R)\ge \mathrm{rank}(R')=2^t$.
Finally, $\mathrm{SR}_{A|B}=\mathrm{rank}(R)$ by phase cleaning, so $\mathrm{SR}_{A|B}\ge 2^t$, which is our desired result.
\end{proof}
This should also be compared with the submatrix-rank entanglement witness method of Ref.~\cite{Xiong2018QuditHypergraphStates}, providing a similar result.
\begin{remark}
\label{rem:scope-core-results}
Theorem~\ref{thm:core-certificate-results} lower-bounds the Schmidt rank of the \emph{original} hypergraph state across $A|B$ using a high-rank submatrix of $R$. It is a Schmidt-rank certificate and does not, by itself, imply that the entropy of entanglement across $A|B$ equals $t$ or that the Schmidt spectrum is flat.
\end{remark}

\subsection{Disjoint bridge matching: a sufficient combinatorial condition}
\label{subsec:bridge-matching}

In this and the following subsection we present the main results of the paper. Theorem~\ref{thm:core-certificate-results} ensures that, given a residual-free bilinear core, the lower bound on cross-cut entanglement can be specified. In what follows we will provide a sufficient condition, stated in terms of the cross-hyperedge pattern, which guarantees the existence of a large core provided a single restriction. The standard example is a set of \textit{bridges} from one vertex in $A$ to a subset of vertices in $B$, as resulting from CCZ-type hyperedges across the cut.

\begin{proposition}
\label{prop:bridge-certificate-results}
Suppose there exist $r$ cross hyperedges of the form
\begin{equation}
h_k \;=\; \{i_k\}\cup T_k,
\qquad
i_k\in A,\quad T_k\subseteq B,\quad |T_k|\ge 1,
\label{eq:bridge-hk}
\end{equation}
such that the following three properties hold simultaneously.
First, the $B$-supports $T_1,\dots,T_r$ are pairwise disjoint.
Second, there is no other cross hyperedge whose vertex set intersects two distinct blocks $\{i_k\}\cup T_k$ and $\{i_{k'}\}\cup T_{k'}$ with $k\neq k'$.
Third, within each block $\{i_k\}\cup T_k$, the only cross hyperedge supported entirely inside that block (modulo $2$ cancellation) is $h_k$ itself.
Pick an arbitrary representative $j_k\in T_k$ for each $k$, define active sets $I=\{i_1,\dots,i_r\}\subseteq A$ and $J=\{j_1,\dots,j_r\}\subseteq B$, and choose the restriction
\begin{equation}
\label{eq:alpha-bridge-results}
\begin{split}
u_{A\setminus I} &\leftarrow 0, \\
v_j &\leftarrow 1 \quad \text{for } j\in \bigcup_{k=1}^r (T_k\setminus\{j_k\}), \\
v_j &\leftarrow 0 \quad \text{for } j\in B\setminus \bigcup_{k=1}^r T_k.
\end{split}
\end{equation}
Then $(I,J;\beta=\mathbf{0},\alpha)$ exposes a residual-free bilinear core with $\Gamma_{\mathrm{core}}=I_r$ (the $r\times r$ identity), and consequently
\begin{equation}
\mathrm{SR}_{A|B}\ \ge\ 2^r.
\label{eq:bridge-bound-results}
\end{equation}
\end{proposition}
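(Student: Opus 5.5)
The plan is to compute the restricted cross phase $\widetilde f^{(I,J;\mathbf{0},\alpha)}_{\mathrm{core}}$ directly, monomial by monomial, from the ANF of $f_{AB}$. Each cross monomial corresponds, after mod-$2$ cancellation, to a cross hyperedge $e$ with $e\cap A\neq\varnothing$ and $e\cap B\neq\varnothing$. The aim is to show that under the restriction \eqref{eq:alpha-bridge-results} every such monomial either vanishes or equals exactly one bilinear term $u_{i_k}v_{j_k}$. Then $\widetilde f_{\mathrm{core}}(u_I,v_J)=\bigoplus_{k=1}^r u_{i_k}v_{j_k}=u_I^\top I_r\,v_J$, with $\phi=\psi=0$ in Eq.~\eqref{eq:core-decomposition-results}. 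This means $(I,J;\mathbf{0},\alpha)$ exposes a residual-free bilinear core with $\Gamma_{\mathrm{core}}=I_r$, whose $\mathbb{F}_2$-rank is $r$. Theorem~\ref{thm:core-certificate-results} then gives $\mathrm{SR}_{A|B}\ge 2^r$.

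The case analysis on a cross hyperedge $e$ runs as follows. First, if $e$ contains any $A$-vertex outside $I$, or any $B$-vertex outside $\bigcup_k T_k$, its monomial contains a variable set to $0$ and vanishes. Second, the surviving hyperedges therefore satisfy $e\subseteq I\cup\bigcup_k T_k$, and since the $T_k$ are pairwise disjoint, this set is the disjoint union of the blocks $\{i_k\}\cup T_k$. Third, if a surviving $e$ meets two distinct blocks, it cannot be one of the $h_k$, because each $h_k$ lies inside its own block. So the second hypothesis forbids it. Fourth, every surviving $e$ therefore lies inside a single block, and by the third hypothesis it equals $h_k$, with any duplicate copies already cancelled mod $2$. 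Finally, for $h_k$ itself, the factors $v_j$ with $j\in T_k\setminus\{j_k\}$ are set to $1$, so the monomial $u_{i_k}\prod_{j\in T_k}v_j$ becomes $u_{i_k}v_{j_k}$, exactly as in the linearization \eqref{eq:linearize-monomial}.

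The main obstacle is the bookkeeping in the middle steps. One must read the hypotheses at the level of the ANF of $f_{AB}$, where hyperedges are counted modulo $2$, rather than at the level of the raw hyperedge list. One must also make sure that the case split is exhaustive. The delicate cases are a cross hyperedge that touches one block and also vertices outside all blocks, which is killed in the first step, and a hyperedge inside a block whose $B$-part omits $j_k$. Such a hyperedge would restrict to the cut-local term $u_{i_k}$, which would be harmless anyway, but the third hypothesis already excludes it. Pure $A$ or pure $B$ monomials belong to $f_A$ or $f_B$ and were removed by phase cleaning (Lemma~\ref{lem:phase-cleaning}), so they never enter $f_{AB}$. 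Once these points are checked, the conclusion follows directly from Theorem~\ref{thm:core-certificate-results}.
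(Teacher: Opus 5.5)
Your proposal is correct and follows the same route as the paper's proof. The zero-assignments kill every monomial touching $A\setminus I$ or $B\setminus\bigcup_k T_k$, the second and third hypotheses exclude cross-block and extra intra-block survivors, each $h_k$ linearizes to $u_{i_k}v_{j_k}$, and Theorem~\ref{thm:core-certificate-results} finishes; your version just makes the exhaustive case split explicit. Like the paper, you tacitly need the $i_k$ to be distinct, both for the blocks $\{i_k\}\cup T_k$ to be pairwise disjoint and for $|I|=r$. The statement leaves this implicit via $\Gamma_{\mathrm{core}}=I_r$, and it fails without it, e.g.\ $A=\{1\}$ with $h_1=\{1,2\}$, $h_2=\{1,3\}$ gives $\mathrm{SR}_{A|B}\le 2$.
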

\begin{proof}
Each bridge hyperedge $h_k=\{i_k\}\cup T_k$ contributes the cross monomial $u_{i_k}\prod_{j\in T_k} v_j$ to $f_{AB}$.
Under the restriction \eqref{eq:alpha-bridge-results}, all variables in $T_k\setminus\{j_k\}$ are set to $1$, and so this monomial reduces to $u_{i_k}v_{j_k}$.
Because the sets $T_k$ are disjoint, these reduced monomials involve distinct $u$-variables and distinct $v$-variables.

The restriction $u_{A\setminus I}\leftarrow 0$ eliminates every cross monomial that contains any $A$-variable outside $I$.
The second property in the statement prevents any other cross hyperedge from producing (after restriction) a surviving cross monomial that couples variables belonging to two different blocks; in particular, it forbids terms that would generate couplings $u_{i_k}v_{j_{k'}}$ with $k\neq k'$ or higher-degree cross residuals spanning multiple blocks.
The third property prevents additional cross residuals supported entirely within a single block $\{i_k\}\cup T_k$ from surviving and spoiling residual-freeness on the active set $I\cup J$.
Therefore, on the active variables $(u_I,v_J)$ the surviving cross phase equals $\bigoplus_{k=1}^r u_{i_k}v_{j_k}$ up to local terms, which is precisely Eq.~\eqref{eq:core-decomposition-results} with $\Gamma_{\mathrm{core}}=I_r$.
The bound Eq.~\eqref{eq:bridge-bound-results} then follows from Theorem~\ref{thm:core-certificate-results}.
\end{proof}

\begin{remark}
\label{rem:why-residualfree}
Eq.~\eqref{eq:linearize-monomial} shows that \emph{individual} high-degree monomials can be linearized by a suitable restriction, however a Schmidt-rank certificate requires a \emph{single global restriction} that defines a $2^r\times 2^r$ submatrix of $R$.
Moreover, allowing higher-degree cross monomials to survive on the active variables can invalidate bilinear rank predictions: the real rank of the induced sign matrix can drop below the $2^{\mathbb{F}_2\text{-rank}}$ value due to interference, as illustrated by the counterexample in Appendix~\ref{app:rank3-counterexample}.
Residual-freeness in Definition~\ref{def:residualfree-core-results} is designed precisely to exclude this problem.
\end{remark}

\subsection{Algorithm}
\label{subsec:algorithm-examples}

\begin{algorithm}[t]
\caption{\textsc{SearchAndVerifyCoreCertificate}}
\label{alg:search-verify-core-compact}
\footnotesize
\begin{algorithmic}[1]
\Require Hyperedges $E$ on $A\cup B$; bipartition $A|B$; target $r$ (or maximum $r_{\max}$)
\Ensure Certificate $(I,J;\beta,\alpha)$ with $\Gamma_{\mathrm{core}}$ and $t=\mathrm{rank}_{\mathbb{F}_2}(\Gamma_{\mathrm{core}})$, or \textsc{Fail}
\State $E_{AB}\gets \textsc{CrossEdges}(E,A,B)$
\State $\text{Blocks}\gets \textsc{BridgeBlocks}(E_{AB})$
    \Comment{$(i,T)$ from $e$ with $e\cap A=\{i\}$}
\For{$r \gets \textsc{SizesToTry}(r,r_{\max})$}
  \State $\{(i_k,T_k)\}_{k=1}^r \gets \textsc{SelectDisjointBlocks}(\text{Blocks},r)$
  \If{\textsc{SelectDisjointBlocks} fails} \State \textbf{continue} \EndIf
  \State $(I,J;\beta,\alpha)\gets \textsc{CanonicalRestriction}\bigl(\{(i_k,T_k)\}_{k=1}^r\bigr)$
  \State $\mathcal{S}\gets \textsc{OddReducedSupports}(E_{AB},I,J;\beta,\alpha)$
    \Comment{reduce $+$ mod-2 cancel}
  \If{\textbf{not} $\textsc{ResidualFree}(\mathcal{S},I,J)$} \State \textbf{continue} \EndIf
  \State $\Gamma_{\mathrm{core}}\gets \textsc{CoreMatrix}(\mathcal{S},I,J)$
    \Comment{$\Gamma_{k\ell}\!=\!1\Leftrightarrow \{i_k,j_\ell\}\!\in\!\mathcal{S}$}
  \State $t\gets \textsc{RankF2}(\Gamma_{\mathrm{core}})$
  \State \Return $(I,J;\beta,\alpha),\,\Gamma_{\mathrm{core}},\,t$
\EndFor
\State \Return \textnormal{\textsc{Fail}}
\end{algorithmic}
\end{algorithm}

Proposition~\ref{prop:bridge-certificate-results} suggests a direct search and verify approach can be designed. The initial search phase is tasked with proposing candidate active sets $I\subseteq A$, $J\subseteq B$ and a restriction $(\beta,\alpha)$, aimed at linearizing many cross-hyperedges into bilinear bridges, while the verification phase checks for residual freeness and extracts $\Gamma_{\mathrm{core}}$. Theorem~\ref{thm:core-certificate-results} then guarantees that any candidate which passes verification will yield a mathematically correct Schmidt-rank lower bound.

\paragraph{Inputs and outputs.}
The input is a hypergraph specified by its list of hyperedges $E$, together with the bipartition $A|B$ and either a target size $r$ or a maximal size that the algorithm can search for.
The output is either a valid certificate $(I,J;\beta,\alpha)$ with the associated core matrix $\Gamma_{\mathrm{core}}$ and $t=\mathrm{rank}_{\mathbb{F}_2}(\Gamma_{\mathrm{core}})$, which implies $\mathrm{SR}_{A|B}\ge 2^t$, or a declaration that the current search heuristic failed to find a certificate of the requested size.

\paragraph{Search phase: proposing candidate bridge blocks.}
Given the hyperedges $E$ and bipartition $A|B$, we first extract the cross hyperedges
\begin{equation}
E_{AB} \;\coloneqq\; \{e\in E:\ e\cap A\neq\emptyset\ \text{and}\ e\cap B\neq\emptyset\}.
\label{eq:EAB-def}
\end{equation}
Of particular interest are cross hyperedges that touch exactly one vertex in $A$; for each such hyperedge $e$ with $e\cap A=\{i\}$ and $e\cap B=T$, we record a \textit{bridge block} $(i,T)$ corresponding to the cross monomial $u_i\prod_{j\in T} v_j$.
Motivated by \eqref{eq:derivative-peels-factor}--\eqref{eq:linearize-monomial}, we then seek a collection of $r$ blocks $(i_k,T_k)$ such that the $A$-indices $i_k$ are distinct and the $B$-supports $T_k$ are disjoint.
This is a set-packing problem; in structured instances the disjointness patterns are often clear, while in denser instances a greedy heuristic that is biased towards a small $|T|$ typically reduces conflicts. It is important to note that set packing problem is known to be NP-complete, as well as very difficult to solve via optimization methods \cite{Halldorsson1998}. This must be taken into account when selecting very large $B$ sets. 

\paragraph{Building a canonical global restriction.}
Given selected disjoint blocks $(i_k,T_k)$, we choose one representative active $B$-index $j_k\in T_k$ for each $k$, and set $I=\{i_1,\dots,i_r\}$ and $J=\{j_1,\dots,j_r\}$.
We then define a single global restriction by setting $u_{A\setminus I}\leftarrow 0$ and, on $B$, setting all variables in $\bigcup_k(T_k\setminus\{j_k\})$ to $1$ while setting all other non-active $B$-variables to $0$, exactly as in \eqref{eq:alpha-bridge-results}.
This is the canonical choice that linearizes each block into $u_{i_k}v_{j_k}$ if no other cross residuals interfere.

\paragraph{Verification phase: residual-freeness test and core extraction.}
Verification is performed at the level of hyperedges, keeping careful track of $\mathbb{F}_2$ cancellation.
For each cross hyperedge $e\in E_{AB}$ we split $e$ into $e_A=e\cap A$ and $e_B=e\cap B$, apply the restriction $(u_{A\setminus I}=\beta,\ v_{B\setminus J}=\alpha)$, and compute the surviving active support
\begin{equation}
e' \;\coloneqq\; (e_A\cap I)\ \cup\ (e_B\cap J)\ \subseteq I\cup J,
\label{eq:reduced-support}
\end{equation}
discarding $e$ entirely if any fixed-to-$0$ variable lies in $e$.
Because the ANF is defined modulo $2$, we maintain a parity count of each resulting support $e'$ (equivalently, we toggle its presence in a hash map); only supports occurring an odd number of times survive.

Consider $\mathcal{M}$ as the set of surviving supports with odd parity. The residual-freeness condition holds if and only if there is no surviving support $e'\in\mathcal{M}$ that intersects both $I$ and $J$ and has size $|e'|\ge 3$. If this does not hold, higher-degree cross monomial terms will be present on the active variables and the candidate will not certify a bilinear core. This is because the bilinear cross monomials are exactly those $e'\in\mathcal{M}$ with $|e'|=2$ and with one element in $I$ and one in $J$.
These define $\Gamma_{\mathrm{core}}\in\mathbb{F}_2^{r\times r}$ by setting $(\Gamma_{\mathrm{core}})_{k\ell}=1$ precisely when $\{i_k,j_\ell\}\in\mathcal{M}$.
Any surviving monomials contained entirely in $I$ or entirely in $J$ contribute only to the local terms $\phi$ and $\psi$ in \eqref{eq:core-decomposition-results}.
Finally, we compute $t=\mathrm{rank}_{\mathbb{F}_2}(\Gamma_{\mathrm{core}})$ by Gaussian elimination over $\mathbb{F}_2$ and invoke Theorem~\ref{thm:core-certificate-results} to conclude $\mathrm{SR}_{A|B}\ge 2^t$.

\begin{remark}
\label{rem:complexity-results}
For a fixed candidate $(I,J;\beta,\alpha)$, verification is polynomial in the input size: it requires a single pass over $E_{AB}$ to compute the reduced supports and an $O(r^3)$ elimination to obtain $\mathrm{rank}_{\mathbb{F}_2}(\Gamma_{\mathrm{core}})$.
The combinatorial search for large disjoint block families can be hard in the worst case, but in the structured regimes that motivate this work (collections of near-disjoint bridges, sparse cross incidence, or planted core patterns) simple greedy selection with shallow local modifications often produces large certificates.
\end{remark}

\begin{example}[Vanishing bilinear slice but exponential certificate from disjoint CCZ bridges]
\label{ex:vanishing-bilinear-slice-exponential-core}
Let
\[
A=\{1,4,7\},
\qquad
B=\{2,3,5,6,8,9\},
\]
and take three cross hyperedges
\[
\{1,2,3\},\qquad \{4,5,6\},\qquad \{7,8,9\}.
\]
Then the cross-cut phase is
\[
f_{AB}(u,v)=u_1v_2v_3 \ \oplus\ u_4v_5v_6 \ \oplus\ u_7v_8v_9,
\]
which contains no degree-$2$ cross monomials, so the global bilinear slice satisfies $\Gamma_{AB}=0$ and the baseline rule of Sec.~\ref{sec:baseline} is silent.

Choose active sets $I=\{1,4,7\}$ and $J=\{2,5,8\}$ and apply the restriction $v_3\leftarrow 1$, $v_6\leftarrow 1$, and $v_9\leftarrow 1$ (there are no non-active $A$ bits in this example).
On the active variables $(u_I,v_J)$, the restricted cross phase becomes
\[
\widetilde f^{(I,J;\beta,\alpha)}_{\mathrm{core}}(u_I,v_J)
=
u_1v_2 \ \oplus\ u_4v_5 \ \oplus\ u_7v_8,
\]
with no residual higher-degree cross terms and no additional local terms.
Thus $\Gamma_{\mathrm{core}}=I_3$ and $t=3$, so Theorem~\ref{thm:core-certificate-results} yields the exponential Schmidt-rank certificate
\[
\mathrm{SR}_{A|B}\ \ge\ 2^{3}=8,
\]
despite $\Gamma_{AB}=0$.
\end{example}

\section{Discussion and Outlook}
\label{sec:discussion}

Our analysis reframes bipartite entanglement of hypergraph states in terms of the phase-cleaned sign matrix $R_{u,v}=(-1)^{f_{AB}(u,v)}$ across a cut $A|B$.
For graph states the cross phase is purely bilinear, and the entanglement structure is rigid: the Schmidt rank is exactly $2^{\mathrm{rank}_{\mathbb{F}_2}(\Gamma_{AB})}$, where $\Gamma_{AB}$ is the cross adjacency block.
Hypergraph states depart from this regime in two coupled ways.
First, the global bilinear slice may vanish even when the state is highly entangled, because high-degree cross hyperedges contribute only higher-order monomials to $f_{AB}$.
Second, even when a nontrivial bilinear slice is present, additional higher-degree cross terms can interfere with it and reduce the \emph{real} rank of $R$, so that $\mathrm{rank}_{\mathbb{F}_2}(\Gamma_{AB})$ ceases to be a reliable predictor unless higher-degree residuals are controlled (cf.\ Appendix~\ref{app:rank3-counterexample}).

The restriction-based viewpoint of Sec.~\ref{sec:results} addresses both issues by turning entanglement certification into the constructive task of isolating a \emph{residual-free} bilinear ``core'' on an active slice.
Formally, a single restriction selects a submatrix of $R$; if the restricted cross phase becomes bilinear up to cut-local terms, then a Walsh-character orthogonality argument certifies an exponential rank lower bound.
This converts the entanglement problem into a combinatorial one: finding large families of cross hyperedges that can be simultaneously linearized by one global assignment while avoiding cross-coupled residuals.
The disjoint bridge matching condition provides a clean sufficient criterion in this direction, and it captures the common situation of many CCZ-like bridges across a cut.

Several directions appear especially promising.
On the combinatorial side, it would be valuable to characterize when large residual-free cores must exist, for instance in random or structured hypergraph ensembles, and to understand the gap between the best provable core size and the true Schmidt rank.
On the algorithmic side, the verification phase of our method is efficient, but searching for large certificates can be hard in general; identifying tractable regimes and principled heuristics, possibly guided by Boolean-derivative structure, is likely important for practical use.
On the physics side, the present work focuses on Schmidt rank and postselected extraction; extending the techniques to bound entanglement entropies (such as R\'enyi-2 \cite{Adesso2012Renyi2Entropy} via Walsh averages \cite{XiaoMassey1988Walsh}, or von Neumann entropy directly from hyperedge structure remains an interesting challenge, as does generalizing beyond $\{0,\pi\}$ phases to weighted hypergraph states with arbitrary diagonal phases.

\section{Conclusion}
\label{sec:conclusion}

We developed a combinatorial framework for certifying bipartite entanglement of quantum hypergraph states from their Boolean phase polynomials.
Across any cut $A|B$, phase cleaning reduces the problem to the real sign matrix $R_{u,v}=(-1)^{f_{AB}(u,v)}$, whose rank equals the Schmidt rank.
In the purely bilinear setting this recovers the standard graph-state cut-rank rule, $\mathrm{SR}_{A|B}=2^{\mathrm{rank}_{\mathbb{F}_2}(\Gamma_{AB})}$.
For general hypergraph states with higher-degree cross terms, we defined residual-free bilinear cores exposed by a single restriction of non-active variables, and derived an explicit exponential Schmidt-rank certificate in terms of the $\mathbb{F}_2$ rank of the exposed core matrix $\Gamma_{\mathrm{core}}$.
We gave a sufficient combinatorial condition based on disjoint bridge matching and described an efficient verification procedure that turns candidate restrictions into rigorous certificates.

Together, these results offer a practical route to understanding and certifying entanglement structure in hypergraph states beyond the reach of the global bilinear slice, and they open several avenues for improving average extractable entanglement and for extending the framework to broader classes of phase states.

% =========================================================
\appendix
\section{Proof of Theorem~\ref{thm:bilinear-cutrank}}
\label{app:th1-proof}
For completeness, we provide here a proof of Theorem~\ref{thm:bilinear-cutrank}. This can be found, for example in Ref.~\cite{Hein2006}.
\begin{proof}
By performing phase cleaning (Sec.~\ref{subsec:prelim-bipartition}) one may directly compute the Schmidt rank from the sign matrix $R$:
$\mathrm{SR}_{A|B}=\mathrm{rank}(M)=\mathrm{rank}(R)$, according to Eq.~\eqref{eq:M-factor} and Eq.\eqref{eq:SR-rankR-prelim}.
Under the assumption \eqref{eq:pure-bilinear-assumption}, the sign matrix elements are
\[
R_{u,v} = (-1)^{u^\top \Gamma_{AB} v}.
\]
For each $u\in\{0,1\}^a$, define the row function (character)
\[
\chi_u:\{0,1\}^b\to\{\pm1\},
\qquad
\chi_u(v)\coloneqq (-1)^{u^\top \Gamma_{AB} v}.
\]
Note that $\chi_u$ depends only on the linear form $u^\top\Gamma_{AB}$ (equivalently, on $\Gamma_{AB}^\top u\in\mathbb{F}_2^b$).
Thus two rows $\chi_u$ and $\chi_{u'}$ will coincide if and only if $u^\top\Gamma_{AB}={u'}^\top\Gamma_{AB}$ as functions of $v$ (that means if and only if $u\oplus u'\in\ker(\Gamma_{AB}^\top)$).
Because  $\mathrm{rank}_{\mathbb{F}_2}(\Gamma_{AB}^\top)=k$, the image of $\Gamma_{AB}^\top$ has size $2^k$, and thus there are exactly $2^k$ distinct rows.

Furthermore, distinct rows are orthogonal under the uniform inner product on $\{0,1\}^b$:
for any $u,u'$,
\begin{align*}
\sum_{v\in\{0,1\}^b} \chi_u(v)\chi_{u'}(v)
&=
\sum_{v} (-1)^{(u\oplus u')^\top \Gamma_{AB} v}\\
&=
\begin{cases}
2^b, & (u\oplus u')\in\ker(\Gamma_{AB}^\top),\\
0, & \text{otherwise.}
\end{cases}
\end{align*}
Therefore the $2^k$ distinct rows form a mutually orthogonal set in $\mathbb{R}^{2^b}$ and are linearly independent. They represent the rows of the $R$ matrix, hence $\mathrm{rank}(R)=2^k$.
This gives $\mathrm{SR}_{A|B}=2^k$ and $E_{\mathrm{SR}}(A|B)=k$, which is a well known result for bipartite entanglement in graph states \cite{Hein2006}.
\end{proof}
\section{Rank-drop counterexample: higher-degree cross terms can reduce real rank}
\label{app:rank3-counterexample}

This appendix gives an explicit four-qubit example demonstrating that higher-degree cross terms can \emph{reduce} the real rank of the phase-cleaned sign matrix $R_{u,v}=(-1)^{f_{AB}(u,v)}$, and hence reduce the Schmidt rank across a cut below the bilinear prediction $2^{\mathrm{rank}_{\mathbb{F}_2}(\Gamma_{AB})}$. This motivates the residual-freeness requirement in Definition~\ref{def:residualfree-core-results}.

Consider a cut with two qubits on each side,
\[
A=\{1,2\},\qquad B=\{3,4\},
\]
and write the $A$-bits as $u=(u_1,u_2)\in\{0,1\}^2$ and the $B$-bits as $v=(v_1,v_2)\in\{0,1\}^2$.
Let the cross-cut phase polynomial be
\begin{equation}
f_{AB}(u,v)
\;=\;
u_1v_1 \ \oplus\ u_2v_2 \ \oplus\ u_1u_2v_1.
\label{eq:rankdrop-f}
\end{equation}
The bilinear slice is the first two terms, hence
\begin{equation}
\Gamma_{AB}
=
\begin{pmatrix}
1 & 0\\
0 & 1
\end{pmatrix},
\qquad
\mathrm{rank}_{\mathbb{F}_2}(\Gamma_{AB})=2.
\label{eq:rankdrop-Gamma}
\end{equation}
If the cross phase were purely bilinear, Theorem~\ref{thm:bilinear-cutrank} would give $\mathrm{SR}_{A|B}=2^2=4$.
However, the cubic cross term $u_1u_2v_1$ survives on the active variables and changes the real sign matrix.

Define the phase-cleaned sign matrix $R\in\{\pm1\}^{4\times 4}$ by
\[
R_{u,v}=(-1)^{f_{AB}(u,v)},
\]
with rows indexed by $u\in\{00,01,10,11\}$ and columns indexed by $v\in\{00,01,10,11\}$ in lexicographic order (with $v=(v_1,v_2)$).
A direct evaluation yields
\begin{equation}
R \;=\;
\begin{pmatrix}
\phantom{-}1 & \phantom{-}1 & \phantom{-}1 & \phantom{-}1 \\
\phantom{-}1 & -1 & \phantom{-}1 & -1 \\
\phantom{-}1 & \phantom{-}1 & -1 & -1 \\
\phantom{-}1 & -1 & \phantom{-}1 & -1
\end{pmatrix}.
\label{eq:rankdrop-R}
\end{equation}

The rank upper bound is immediate: the rows corresponding to $u=01$ and $u=11$ coincide, so $\mathrm{rank}(R)\le 3$.
To see that the rank is not smaller, note that the first three rows of Eq.~\eqref{eq:rankdrop-R} are nonzero and mutually orthogonal in $\mathbb{R}^4$, hence linearly independent.
Equivalently, the $3\times 3$ minor on the first three rows and first three columns has determinant
\begin{equation}
\det\!\begin{pmatrix}
1 & 1 & 1\\
1 & -1 & 1\\
1 & 1 & -1
\end{pmatrix}
=4\neq 0,
\label{eq:rankdrop-det}
\end{equation}
so $\mathrm{rank}(R)\ge 3$.
Therefore $\mathrm{rank}(R)=3$, and since $\mathrm{SR}_{A|B}=\mathrm{rank}(R)$ for REW states after phase cleaning (Sec.~\ref{subsec:prelim-bipartition}), we obtain
\begin{equation}
\mathrm{SR}_{A|B}=3
\;<\;
4
=
2^{\mathrm{rank}_{\mathbb{F}_2}(\Gamma_{AB})}.
\label{eq:rankdrop-conclusion}
\end{equation}

It is instructive to see how the cubic term produces the drop.
For $u=11$, the phase Eq.~\eqref{eq:rankdrop-f} reduces to
\[
f_{AB}(11,v)=v_1\oplus v_2 \oplus v_1 = v_2,
\]
so the $u=11$ row equals $(-1)^{v_2}$, which is exactly the $u=01$ row (since $f_{AB}(01,v)=v_2$).
Thus the higher-degree cross term cancels the $v_1$-dependence that would be present in the purely bilinear case, merging two Walsh-character rows and reducing the real rank.
This example is precisely the failure mode excluded by the residual-freeness condition in Definition~\ref{def:residualfree-core-results}.

\section*{Acknowledgment}
During the preparation of this manuscript, the authors used \textit{Generative AI (ChatGPT 5.2)} to improve the spelling, grammar, and readability of the text. After using this tool, the authors reviewed and edited the content as needed and take full responsibility for the final version of the manuscript.

\bibliographystyle{apsrev4-2}
\bibliography{bibliography}% Produces the bibliography via BibTeX.

\end{document}